\definecolor{dred}{rgb}{.8,0.2,.2}
\definecolor{dgreen}{rgb}{.2,0.5,.2}
\definecolor{ddred}{rgb}{.8,0.5,.5}
\definecolor{dblue}{rgb}{.2,0.2,.8}
\newcommand{\bi}[0]{\begin{itemize}}
\newcommand{\ei}[0]{\end{itemize}}
\newcommand{\ben}[0]{\begin{enumerate}}
\newcommand{\een}[0]{\end{enumerate}}
\newcommand{\beq}[0]{\begin{equation}}
\newcommand{\eeq}[0]{\end{equation}}
\newcommand{\eqr}[1]{Eq.~(\ref{#1})}
\newtheorem{theorem}{Theorem}
\newtheorem{prop}{Proposition}
\newtheorem{definition}{Definition}
\newtheorem{lemma}{Lemma}
\newtheorem{protocol}{Protocol}
\newtheorem{conjecture}{Conjecture}
\newcommand{\ket}[1]{\vert{#1}\rangle}
\newcommand{\bra}[1]{\langle{#1}\vert}
\newcommand{\proj}[1]{\ket{#1}\bra{#1}}
\newcommand{\tr}{\mathrm{Tr}}
\newcommand{\supp}{\mathrm{supp}}
\newcommand{\poly}{\mathrm{poly}}
\renewcommand{\l}[0]{\langle}
\renewcommand{\r}[0]{\rangle}
\newcommand{\av}[1]{\langle #1\rangle}
\newcommand{\abs}[1]{\lvert #1\rvert}
\newcommand{\Abs}[1]{\left\lvert #1\right\rvert}
\newcommand{\norm}[1]{\lVert#1\rVert}
\newcommand{\set}[1]{\{ #1  \}}
\newcommand{\ee}{\mathrm{e}}
\newcommand{\ii}{\mathrm{i}}
\newcommand{\Pb}{\mathbb{P}}
\begin{document}
\title{Direct certification of a class of quantum simulations}
\author{D.\ Hangleiter}

\affiliation{Dahlem Center for Complex Quantum Systems, Freie Universit{\"a}t Berlin, D-14195 Berlin, Germany}
\affiliation{Fakult\"at f\"ur Physik, Ludwig-Maximilians-Universit\"at, 80799
M\"unchen, Germany}

\author{M. Kliesch}
\affiliation{Dahlem Center for Complex Quantum Systems, Freie Universit{\"a}t Berlin, D-14195 Berlin, Germany}

\author{M. Schwarz}
\affiliation{Dahlem Center for Complex Quantum Systems, Freie Universit{\"a}t Berlin, D-14195 Berlin, Germany}
\author{J. Eisert}
\affiliation{Dahlem Center for Complex Quantum Systems, Freie Universit{\"a}t Berlin, D-14195 Berlin, Germany}

\date{\today}

\begin{abstract}
One of the main challenges in the field of quantum simulation and computation is to identify ways to certify the correct functioning of a device when a classical efficient simulation is not available. 
Important cases are situations in which one cannot classically calculate local expectation values of state preparations efficiently.  
In this work, we develop weak-membership formulations of the certification of ground state preparations. 
We provide a non-interactive protocol for certifying ground states of frustration-free Hamiltonians based on simple energy measurements of local Hamiltonian terms. 
This certification protocol can be applied to
classically intractable analog quantum simulations: 
For example, using Feynman-Kitaev Hamiltonians, one can 
encode universal quantum computation in such ground states. 
Moreover, our certification protocol is applicable to ground states encodings of IQP circuits demonstration of quantum supremacy. These can be certified efficiently when the error is polynomially bounded. 
\end{abstract}

\maketitle



\section{Introduction}
\label{s:intro}

Quantum devices offer the promise to outperform classical machines computationally and to solve
problems intractable in the classical domain. This idea is most prominent in the idea of a fully-fletched
universal quantum computer that could---once physically realised for large system sizes---solve problems 
like factoring in polynomial time \cite{Shor-1994}. Indeed, much effort is dedicated to realising the building blocks from
which such a universal quantum computer could be built. What is already available with present technology,
however, are quantum simulators \cite{CiracZollerSimulation}, machines able to simulate 
other quantum systems, based on an idea going back to Richard Feynman \cite{Feynman}. In particular, this is true for so-called analog quantum simulators that involve a very large number
of quantum constituents whose time evolution is not discretised as is the case
in digital
simulations. 
Prominent architectures for analog simulation
are based on cold atoms in optical lattices \cite{BlochSimulation}, trapped ions
 \cite{IonSimulation} or superconducting qubits \cite{SuperconductingSimulation}. 
As of yet, their computational power is far from clear, though.  
They are special purpose machines, in that 
one can achieve a tremendous degree of control, but only over some parameters: In this way, it is perfectly
conceivable to prepare ground states of interacting Hamiltonians to probe
effects such as high-$T_c$
superconductivity \cite{ColdFermions} or lattice gauge theories \cite{LatticeGauge2}. 
Equally well, properties of time evolution can be probed in 
dynamical quantum simulators \cite{1408.5148,Trotzky_etal12}. 

In all such approaches the problem of certification arises, providing an answer to the question: 
\emph{Is the device working precisely in the anticipated fashion?} 
In case of a quantum computation that solves a
problem in \textsf{NP}, one can of course in retrospect efficiently certify the correctness
of the solution to the decision problem by classical means. This approach is insufficient in many instances, however. To start with,
not all problems interesting to be solved on a quantum device are decision problems. 
A particularly pressing question for quantum simulations, is to find out whether one has achieved a desired
ground state preparation accurately. 
Moreover, it would be desirable to also obtain information about intermediate steps of a quantum computation or simulation. 
If parameter regimes are available that allow for an efficient classical simulation---as is the case, e.g., 
in non-interacting physical systems---one can set benchmarks in these regimes.
In the physics literature, it is surprisingly common to assume that the only way to certify the correctness
of a quantum simulation is to find regimes in which a classical simulation is available. 

In this work, we show the converse: 
Ground state preparations of
frustration-free Hamiltonians as an example of a type of quantum simulation can be
certified efficiently using local measurements only, while at the same time
the actual outcome of the simulation or quantum computation cannot be predicted
classically. 
Our certification protocol is surprisingly simple and can be applied in a range
of paradigmatic settings including the certification of IQP circuits
\cite{bremner2015average}, the toric code \cite{ToricCode}, graph states \cite{Graphs}, 
and even arbitrary quantum computations encoded in Feynman-Kitaev-type
Hamiltonians \cite{Gosset-PRL-2015, Feynman-1986,Kitaev-2002,Aharonov-2002}.
With the rapid experimental advances in the aforementioned experimental
platforms, the large-scale realization of these models has come into reach. Our
protocol yields an experimentally viable method to certify their correctness
using experimental techniques that are available now. 

The central idea of this work is to combine 
fidelity estimation techniques from quantum state tomography \cite{Cramer-NatComm-2010}
with computational models from 
quantum complexity theory \cite{Kitaev-2002,Aharonov-2002,Gosset-PRL-2015}. 
Its simplicity is the main virtue of our protocol -- both on a conceptual level and in terms of the experimental
capacities required for its implementation. 
Our main theoretical insight -- that certification may well be
possible even when classical simulation is not available -- underpins the
potential of quantum simulators as well as the necessity of their certification, 
hence addressing a timely
topic in the quest for identifying quantum simulators with superior computational power. 
An important part of the purpose of this work is to provide a 
link between the physics and computer science literature, hence contributing to
the debate on how to
reliably certify quantum simulators outperforming classical machines.

Technically, we extend a known fidelity bound for the ground states of
frustration-free Hamiltonians \cite{Cramer-NatComm-2010}.  
Formulating the problem of fidelity estimation for ground states of
frustration-free Hamiltonians as a weak-membership 
certification protocol \cite{Aolita-NatComm-2015} then allows us to address the
problem of rigorously certifying such ground states given an estimate of their
energy only.
To demonstrate that in this very setting, computationally intractable problems
can be solved, we make use
of the framework of a physically plausible variant of a Feynman-Kitaev
Hamiltonian construction \cite{Gosset-PRL-2015}. 
Thus, small technical improvements on known results allow us to show our main
result as formalised in Propositions~\ref{certification} and \ref{complexity}
and make it applicable to real-world experimental settings. 

An intriguing application of our certification protocol arises from Ref.\ \cite{bremner2015average}. In this work, the authors present a restricted (non-universal) family of \emph{commuting} quantum circuits, where sampling from their output probability distribution with an error in trace distance of less than $1/192$ is \emph{provably hard} for any classical computer up to a plausible complexity-theoretic conjecture,
unless the Polynomial Hierarchy (a generalisation of $\mathsf{P} \neq \mathsf{NP}$) collapses.
This circuit can be mapped to the ground state of a frustration-free local Hamiltonian, where our certification method can be applied. 
Let us first note that from energy measurements alone one can in general not guarantee to accept all states within a constant error. 
For such a guarantee a polynomial bound on the preparation error is required. 
However, also for a constant allowed error, if the certification accepts a state, then the state is indeed within the required error bound. 
In this sense, the combination of our methods with the results of Ref.\ \cite{bremner2015average} yield an experimental proposal to demonstrate
\emph{quantum supremacy} \cite{preskill2013quantum}.

\subsection{Related work}
In computer science, notions of verified quantum computation \cite{AharonovInteractive,Kashefi,FitzsimonsNew,Aolita-NatComm-2015}
address the question of how the correct functioning of quantum machines can be checked. 
In such approaches, it is usually a small, well-characterised quantum system that is used to 
certify the functioning of a larger quantum machine making use of the idea of
interactive proof systems involving several rounds of interaction between a
restricted verifier and a powerful prover. In multi-round systems, the dynamics of a system can even be certified
without the need for any quantum mechanical capacities
\cite{Reichardt-2012,Reichardt-Nature-2013}. 
In the same spirit, it is shown in Refs.~\cite{Wiebe-PRL-2014,Wiebe-PRA-2014,Wiebe-NJP-2015} how trust in a 
quantum simulator can be amplified to certifying an untrusted simulator or to learning about the Hamiltonian of an unknown system.
Very recently, a single-round scheme was devised \cite{FitzsimonsNew} that
relies on preparing the ground state of a Feynman-Kitaev Hamiltonian. Here, the
authors consider the general situation in which casting verification
as a \textsf{QMA} problem does not directly lead to a verification protocol, since the
task of the verifier ``may well be comparable in complexity to independently
implementing the circuit to be evaluated''. For this general case the authors
devise a rather sophisticated protocol, which, to an extent, seems to be skew
to the conceptual simplicity desired of a quantum simulator. 

\subsection{Notions of quantum simulation}
\label{s:qs}

We start our discussion by emphasising what properties of quantum simulators we consider indispensable.
Except from solving an interesting problem from the physics perspective, a
quantum simulator should satisfy the following two criteria: it should be
certifiable (reliable), and it should be able to solve a computationally hard
problem that cannot efficiently be solved using classical means (proper) (cf.\
Ref.~\cite{Hauke-RPP-2012}.  
\begin{itemize}
\item[(i)] A quantum simulator should be \emph{proper}. We call a \emph{ground
state quantum simulator} (GS-QS) proper if using it one can solve a
computational task that is believed to be intractable on a classical computer.
 
\item[(ii)] A quantum simulator should be \emph{reliable}. We call a GS-QS \emph{reliable} if there exists a weak-membership certification test (Def.~\ref{weak membership test}) by which the actual preparation $\rho_p$ of the ground state can be certified. 
\end{itemize}

\subsection{Definitions}
\label{s:definitions}

To begin with, let us define the key concepts for what follows. In particular,
these concepts are properties of Hamiltonians relevant
for ground state quantum simulations considered here, related to being local, frustration-free and having a Hamiltonian gap.

\begin{definition}[$k$-local Hamiltonians] \label{local H}
Let $\Lambda$ be a finite set of sites, and
associate with each site $v \in \Lambda$ a finite-dimensional Hilbert
space $\mathcal{H}_v$. A Hamiltonian $H = \sum_{\lambda \subset \Lambda} h_\lambda$ defined on
$\mathcal{H} = \bigotimes_{v \in \Lambda} \mathcal{H}_v$ is called
($k$-)local if each term $h_\lambda$ acts on at most $k = O(1)$ sites within
$\lambda$, i.e., has finite support $|\supp(h_\lambda)| \leq
k$ independent of the system size $N = |\Lambda|$. 
\end{definition}
Throughout, we assume the local terms to be bounded in spectral norm,
$\norm{h_\lambda} \le O(1)$ for all $\lambda$ independent of the system size. Most important will be the situation of encountering spatial locality, where the supports of the $h_\lambda$ are connected sets on a physical lattice such as a cubic lattice.

A Hamiltonian $H$ is said to be \emph{(polynomially) gapped} if the energy gap $\Delta$ between the ground state energy $E_0$ and the first excited eigenenergy $E_1$ satisfies $\Delta = \Delta(N) \propto 1/\poly(N)$.
An important case where $E_0$ can be easily obtained is the case of frustration-free Hamiltonians. Hence, we can set $E_0=0$ in the following definition. 

\begin{definition}[Frustration-free Hamiltonians] \label{frustration-free}
A local Hamiltonian $H = \sum_\lambda h_\lambda$ with ground-state energy $E_0
= 0$ is called
\emph{frustration-free (FF)}, if the projection $P_0$ onto the ground-state space
of $H$ satisfies $H P_0 = 0$ and $h_\lambda P_0 = 0$ for all $\lambda$.
\end{definition}

\section{Certification of proper ground state quantum simulations}
\label{s:certificate}

We now turn to the actual certification scheme and the main insight of this work:
Quantum systems that are described by spatially local, frustration-free and
gapped Hamiltonians with unique ground state realise (i) reliable GS-QS some of
which are even (ii) proper. In other words, in a realistic physical setting,
namely, for systems described by spatially local frustration-free gapped Hamiltonians
with unique ground state it is indeed possible to solve problems that are
believed to be intractable classically.
This insight is very much in the original spirit of quantum simulators presumably outperforming classical devices.

To prove that GS-QS with frustration-free and polynomially gapped Hamiltonians
with unique ground state is reliable we use that the ground state of such
Hamiltonians be certified efficiently by an energy measurement only
\cite{Cramer-NatComm-2010}. Considering the case in which this energy
measurement is performed by measuring local Hamiltonian terms 
\cite{Kitaev-2002,Aharonov-2002}, we formalise this in
Proposition~\ref{certification}. There, we provide bounds on the required number of measurements given a desired accuracy. 

To demonstrate that in the same setting also computationally hard problems
can presumably be solved, we make use of the fact that universal quantum
computation can be performed by cooling to ground states of frustration-free
Hamiltonians \cite{Bravyi-siam-2010,Gosset-PRL-2015}. To demonstrate this one
needs to show that any quantum computation can be reduced to (local)
measurements on that ground state. Specifically, Proposition~\ref{complexity} states
that adiabatic quantum computation can be performed efficiently using a
polynomially gapped spatially local frustration-free Hamiltonian
$H_{\text{ac}}$. Since adiabatic computation (ac) is universal \cite{Aharonov-SIAM-2008}, any problem in \textsf{BQP} can be solved using the system described by $H_{\text{ac}}$.

\subsection{Weak-membership certification}
\label{s:weak membership}

We now define weak-membership quantum state certification. This is at the heart our notion
of certifying the correctness of a preparation. It captures precisely what it means to ``have prepared a
state to a given accuracy in the laboratory''. Is is conveniently conceptualised in terms of a game between a sceptic and restricted certifier Arthur and a powerful but untrusted quantum prover Merlin. Arthur has access to classical computation and spatially local measurements; Merlin is able to prepare arbitrary quantum states
of large quantum systems. The input to the game is the classical description of
the Hamiltonian $H$ the ground state of which we want to prepare. Merlin
prepares a number of independent and identical copies of a quantum state
$\rho_p$. Arthur's goal is to certify that $\rho_p$ is indeed the ground state
of $H$. In our setting, the only interaction we require between Arthur and Merlin is the local measurements of Arthur on the state preparation given to him by Merlin \cite{Aolita-NatComm-2015}. 
The test accepts if the state $\rho_p$ prepared by Merlin is sufficiently close to the 
actual ground state $\rho_0$ in fidelity that is given by $F(\sigma, \rho) :=
\tr[\rho\sigma]$ for at least one of $\sigma$ and $\rho$ pure. The fidelity is
related to the trace distance $d(\sigma, \rho) = \tr[\abs{\sigma - \rho}]/2$
via $1- F(\sigma, \rho)^{1/2} \leq d(\sigma, \rho) \leq ({1- F(\sigma,
\rho)})^{1/2}$. A robust reading of what it precisely means for Arthur to
``certify'' $\rho_p$ as a good approximation of $\rho_0$ in terms of the
fidelity is the notion of weak-membership state certification, which we
illustrate in Fig.~\ref{weak membership}.

\begin{definition}[Weak-membership quantum state certification] \label{weak membership test}
Let $F_T >0$ be a threshold fidelity and $0 < \alpha < 1$ a maximal failure probability. 
A test which takes as an input a classical description of $\rho_0$ and copies of a preparation of $\rho_p$, and outputs ``reject'' or ``accept'' is a \emph{weak-membership certification test} if with high probability $\geq 1- \alpha$ it rejects every $\rho_p$ for which  $F(\rho_p, \rho_0) \leq F_T$, and accepts every $\rho_p$ for which $F(\rho_p,\rho_0) \geq F_T + \delta$ for some fidelity gap $\delta > 0$.
\end{definition}

\begin{figure}[t!]
\begin{tikzpicture}[scale = 1.5]
  \node (center) at (0,0)  {}; 

  \draw[fill = red!60] (center) ellipse (2 and 1.5);
  \draw[fill = black!40!blue!50,dashed] (center) ellipse (1.25 and 1.25*.75) ;
  \draw[fill = green,dashed] (center) ellipse (.75 and .75*.75) ;

  \draw [<->,thick ](center) --node[sloped,above]  {$1-F_T$}  (1.25,0) ;
  \draw [<->, ](-.75,0) - - node[sloped,below]  {$\delta$}  (-1.25,0) ;
  \filldraw [black!70] (center) circle[radius=2pt] node (rho0) {} node [left,black] {$\rho_0$}; 
  \filldraw [black!70] (-1.25,.8) circle[radius=2pt] node (rhop) {} node [left,black] {$\rho_p$}; 
  \draw [dotted] (center)  --node [midway,sloped,below] {$1-F$} (rhop);
  \node (reject) at (0,1.15) {reject};
  \node at (0,-.2) [below] {accept};
  \node at (.8,-.3) [below] {\textbf{?}};
\end{tikzpicture}
\caption{\textit{Weak-membership certification}: When performing
protocol~\ref{certprotocol} it turns out that a prepared state $\rho_p$ is accepted if 
$F \geq F_T +\delta$ (completeness) and rejected if $F < 1-F_T$ (soundness). 
If $\rho_p$ lies in the blue region, no acceptance is guaranteed. However, in case of acceptance, $\rho_p$ is within the desired fidelity region. }
\label{weak membership}  
\end{figure}
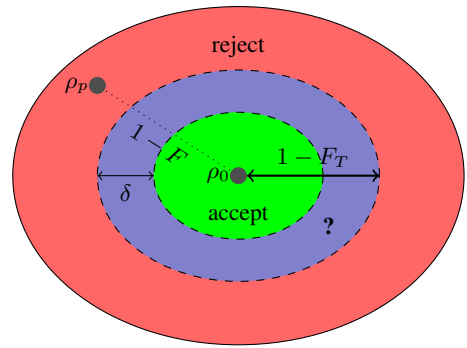  

\subsection{Certification protocol}
\label{s:protocol}

In this section, we lay out the actual certification protocol of frustration-free Hamiltonians. The anticipated state to be certified
is the ground state $\rho_0$ of $H$, on which a sequence of local measurements 
is performed,
followed by efficient classical post-processing.

\begin{protocol}[Certification protocol for FF Hamiltonians]\label{certprotocol} $\,$
\begin{enumerate}
\item Arthur chooses a threshold fidelity $F_T< 1$, maximal failure probability $1> \alpha >0$ and an estimation error $\epsilon \leq (1-F_T)/2$.

\item Now Arthur asks Merlin to prepare a sufficient number of copies of the
  ground state $\rho_0$ of the frustration-free and gapped Hamiltonian $H = \sum_{\lambda} h_\lambda$, i.e., the unique state that fulfils $H \rho_0 =0$. 

\item Arthur measures each of the $n$ local terms $m$ times on the copies of
the state $\rho_p$ prepared by Merlin to determine an estimate $E^*$ of the
expectation value $\sum_\lambda \tr[\rho_p h_\lambda]$. Each measurement is
performed on a \emph{new} copy, and $m$ is given by expression \eqref{measurements}.

\item From the estimate $E^*$ he obtains an estimate $F_{\min}^*$ of the
lower bound $F_{\min}$ \eqref{fidelity bound} on the fidelity $F =F(\rho_p, \rho_0)$
such that $F_{\min}^* \in [F_{\min} - \epsilon, F_{\min} + \epsilon]$ with probability at least $1-\alpha$. 

\item  If $F_{\min}^* < F_T + \epsilon$ he rejects, otherwise he
accepts. 
\end{enumerate}
\end{protocol}

The subsequent proposition makes the resources needed for such a certification
more precise, and sets the fidelity gap $\Delta$ for which
Protocol~\ref{certprotocol} is a weak-membership test. 

\begin{prop}[Weak-membership certification] \label{certification}
Let $H=\sum_\lambda h_\lambda$ be a gapped Hamiltonian with unique ground state,  ground state energy $E_0 \coloneqq 0$, gap $\Delta$, and interaction strength $J  = \max_\lambda \norm{h_\lambda}$. 
Then Protocol~\ref{certprotocol} is a weak-membership certification test with fidelity gap
\begin{align} 
\delta =  \left( 1- F_T \right) \left( 1- \frac{\Delta}{\norm{H}} \right) +
\frac{2 \epsilon \Delta}{\norm{H}} 
\, , \label{fidelity gap}
\end{align}
 and requires
\begin{align}
  m \geq  \frac{J^2 n^2 }{2 \Delta^2 \epsilon^2 } \ln \left[ - \frac{n +1}{\ln
 (1 -\alpha)} \right]\,. 
 \label{eq:sample complexity}
\end{align}
measurements of each of the $n$ local terms on $\rho_p$ to determine the expectation value $\av{H}_{\rho_p} =\tr[ H \rho_p]$. 
\end{prop}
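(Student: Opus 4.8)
The plan is to collapse everything onto a single fidelity bound relating the measured energy to the fidelity, and then to thread the statistical estimation error through it. First I would fix the spectral decomposition $H = \sum_j E_j \proj{j}$ with $E_0 = 0$ and $E_j \geq \Delta$ for $j \geq 1$, write the unique ground state as $\rho_0 = \proj{0}$, and for any preparation $\rho_p$ set $F \coloneqq F(\rho_p,\rho_0) = \bra{0}\rho_p\ket{0}$ and $E \coloneqq \av{H}_{\rho_p}$. Since $\sum_{j\geq 1}\bra{j}\rho_p\ket{j} = 1 - F$ and each nonzero $E_j$ is pinched between $\Delta$ and $\norm{H}$, this yields the two-sided estimate
\begin{align}
\Delta\,(1-F) \;\leq\; E \;\leq\; \norm{H}\,(1-F). \label{plan:twosided}
\end{align}
The left inequality is precisely the known ground-state fidelity bound of Ref.~\cite{Cramer-NatComm-2010}, rearranged to $F \geq 1 - E/\Delta \eqqcolon F_{\min}$; the right inequality is the extension I will need for completeness.

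From \eqref{plan:twosided} both halves of the weak-membership test follow. For \emph{soundness} I would use only $F_{\min} \leq F$: conditioned on the estimate obeying $F^*_{\min} \in [F_{\min}-\epsilon, F_{\min}+\epsilon]$, acceptance ($F^*_{\min} \geq F_T + \epsilon$) forces $F \geq F_{\min} \geq F^*_{\min} - \epsilon \geq F_T$, so every $\rho_p$ with low fidelity is rejected with probability $\geq 1-\alpha$; equivalently, acceptance certifies $F \geq F_T$. For \emph{completeness} I would feed the right inequality of \eqref{plan:twosided}, namely $F_{\min} \geq 1 - \norm{H}(1-F)/\Delta$, into the acceptance rule: acceptance is guaranteed once $F_{\min} \geq F_T + 2\epsilon$, and substituting this lower bound on $F_{\min}$ and solving the resulting affine inequality for $F$ produces exactly the threshold $F \geq F_T + \delta$ with $\delta$ as in \eqref{fidelity gap}. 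This is the step I expect to require the most care, and it is where the physics enters: the factor $1 - \Delta/\norm{H}$ and the term $2\epsilon\Delta/\norm{H}$ arise solely from the non-standard upper bound in \eqref{plan:twosided}, so the whole completeness direction rests on that extension together with correctly propagating both the tolerance $\epsilon$ and the acceptance offset through the map $E \mapsto 1 - E/\Delta$.

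It remains to certify the statistical claim $F^*_{\min} \in [F_{\min}-\epsilon,F_{\min}+\epsilon]$ with probability $\geq 1-\alpha$, which by $F_{\min} = 1 - E/\Delta$ is equivalent to estimating $E$ to additive accuracy $\epsilon\Delta$. Here I would split the error budget equally across the $n$ terms, requiring that each empirical mean $\bar h_\lambda$ of $m$ fresh single-shot outcomes deviate from $\tr[\rho_p h_\lambda]$ by at most $\epsilon\Delta/n$. Each outcome of $h_\lambda$ lies in an interval whose width is controlled by $J = \max_\lambda\norm{h_\lambda}$, so Hoeffding's inequality bounds the per-term failure probability by an exponential in $-m(\epsilon\Delta/n)^2/J^2$; combining the $n$ independent confidence statements and forcing the total failure probability below $\alpha$, then inverting for $m$, reproduces \eqref{eq:sample complexity}. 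The only subtle point in this last step is the bookkeeping of constants when the $n$ per-term guarantees are assembled, which is what fixes the precise logarithmic factor $\ln[-(n+1)/\ln(1-\alpha)]$; the scaling $m = O\!\left(J^2 n^2 \Delta^{-2}\epsilon^{-2}\log(n/\alpha)\right)$ is otherwise immediate from Hoeffding.
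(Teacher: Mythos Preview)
Your proposal is correct and follows essentially the same route as the paper: the two-sided bound $\Delta(1-F)\le \av{H}_{\rho_p}\le \norm{H}(1-F)$ is exactly the paper's Lemma on fidelity bounds, the soundness and completeness arguments match the paper's use of $F_{\min}\le F$ and $F_{\min}\ge 1-\norm{H}(1-F)/\Delta$ respectively, and the sample-complexity bound is obtained in both cases by per-term Hoeffding with error budget $\epsilon\Delta/n$ followed by combining the $n$ independent guarantees. The only cosmetic difference is that the paper first states the energy-estimation lemma for generic accuracy $\epsilon$ and then rescales by $\Delta$, whereas you absorb the $\Delta$ from the start.
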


With protocol~\ref{certprotocol} one is therefore able to efficiently certify ground state preparations
of polynomially gapped Hamiltonians $H$ that
are at least $1/\poly(n)$ close to the target state in fidelity. 

To keep the notation simple, we have assumed $E_0 = 0$. However, the general case with arbitrary $E_0$ can be reduced to it whenever the value of $E_0$ is known apriori. 
For instance, this is the case whenever $H$ is frustration-free, since the global ground state energy can be obtained from the local ones. 

Moreover, in order to obtain the estimate $F^*_{\min{}}$ of the fidelity lower bound $F_{\min{}}$ the value of the gap $\Delta$ needs to be known. 

Note that \textit{no} assumptions on the prepared state $\rho_p$ are made, in particular, it need not be pure. 

Note also that it does not matter how the measurement of the total energy $\l H \r$ is performed; 
depending on the setup at hand, it may be more suitable to measure the energy directly rather 
than measure all terms individually as insinuated in Step 3 of the protocol. 
One major advantage of our approach is that one can perform the output measurement on the 
same copies as the certification measurement. This means that our certification protocol can 
be simply integrated in the readout protocol of a GS-QS: perform the certification measurements 
on the copies of $\rho_p$ first and then use the same states for the readout measurements if 
the certification test accepts $\rho_p$.

\subsection{Proof of Proposition \ref{certification}}
\label{s:proof}

In order to show Proposition~\ref{certification}, we require some detail on how to estimate the global energy from measurements of local Hamiltonian 
terms, cast into the form of a large-deviation bound. Such a bound is given by
the following Lemma, which is stated and proved along the same lines as
Lemma~S4 in Ref.~\cite{Aolita-NatComm-2015}. 

\begin{lemma}[Estimation of the energy] Decompose the local terms in their
eigenbasis: $h_\lambda = \sum_\mu e_{\lambda,\mu} P_{\lambda,\mu}$, where the
$P_{\lambda,\mu}$ are orthogonal projections onto the eigenspaces corresponding
to eigenvalues $e_{\lambda,\mu}$. Let $X_\lambda^{(i)}$ be the random variable that takes the value $e_{\lambda,\mu}$ with probability $\tr (\rho_p P_{\lambda,\mu})$ by the measurement of $h_\lambda$ on the $i^{\text{th}}$ copy of $\rho_p$. Moreover, let 
\begin{equation}
\av{h_\lambda}_{\rho_p}^* = \frac{1}{m} \sum_{i=1}^m X_\lambda^{(i)} 
\end{equation}
be the estimate
of $\av{h_\lambda}$ on $\rho_p$ by a finite-sample average of $m$ measurement
outcomes, and $\av{H}_{\rho_p}^* = \sum_\lambda \av{h_\lambda}_{\rho_p}^*$ the resulting estimate of $\av{H}_{\rho_p}$.
As above, define $J = \max_\lambda \norm{ h_\lambda}$. For $\bar{\alpha} \geq
1/2$ it holds that 
\begin{equation}
\Pb \left[\abs{\av{H}_{\rho_p}^* - \av{H}_{\rho_p} } \leq \epsilon \right] \geq \bar{\alpha} \, ,
\end{equation}
whenever 
\begin{align}
 m \geq  \frac{J^2 n^2 }{2 \epsilon^2 } \ln \left[ \frac{n +1}{\ln
 (1/\bar{\alpha})} \right]\,. \label{measurements}
\end{align}
\end{lemma}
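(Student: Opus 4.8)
The plan is to read the statement as a classical concentration result for a sum of independent bounded random variables, combining the $n$ per-term estimates using the fact that every outcome is obtained on a fresh copy of $\rho_p$, so all the $X_\lambda^{(i)}$ are mutually independent. First I would record the single-shot statistics: for fixed $\lambda$ the variables $X_\lambda^{(i)}$, $i=1,\dots,m$, are i.i.d., and by construction $\Eb\bigl[X_\lambda^{(i)}\bigr] = \sum_\mu e_{\lambda,\mu}\,\tr(\rho_p P_{\lambda,\mu}) = \tr(\rho_p h_\lambda) = \av{h_\lambda}_{\rho_p}$, so the empirical mean $\av{h_\lambda}^*_{\rho_p}$ is unbiased for $\av{h_\lambda}_{\rho_p}$. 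In the frustration-free setting each $h_\lambda \succeq 0$, hence its eigenvalues lie in $[0,\norm{h_\lambda}]\subseteq[0,J]$ and every $X_\lambda^{(i)}$ takes values in an interval of width at most $J$. Hoeffding's inequality for bounded variables then gives, for any $t>0$,
\begin{equation}
\Pb\!\left[\, \abs{\av{h_\lambda}^*_{\rho_p} - \av{h_\lambda}_{\rho_p}} > t \,\right] \le 2\exp\!\left( -\frac{2 m t^2}{J^2}\right).
\end{equation}

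Next I would split the global budget uniformly, choosing $t = \epsilon/n$: if every local estimate is accurate to $\epsilon/n$, then by the triangle inequality $\abs{\av{H}^*_{\rho_p} - \av{H}_{\rho_p}} = \bigl|\sum_\lambda(\av{h_\lambda}^*_{\rho_p} - \av{h_\lambda}_{\rho_p})\bigr| \le \epsilon$. Since the $n$ local estimates are computed from disjoint pools of copies, they are independent, so the probability that they are \emph{simultaneously} $(\epsilon/n)$-accurate factorises and is bounded below by $(1-q)^n$ with $q = \exp(-2m\epsilon^2/(J^2 n^2))$ (up to the two-sided tail factor, which I absorb into the constants). It therefore suffices to secure $(1-q)^n \ge \bar\alpha$, i.e.\ $q \le 1 - \bar\alpha^{1/n}$.

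The final step turns this into Eq.~\eqref{measurements} by an elementary lower bound on $1-\bar\alpha^{1/n}$. Writing $s = \ln(1/\bar\alpha)$ and using $1-\exp(-x) \ge x/(1+x)$ for $x\ge 0$ gives $1-\bar\alpha^{1/n} = 1-\exp(-s/n) \ge s/(n+s)$; the hypothesis $\bar\alpha \ge 1/2$, which forces $s = \ln(1/\bar\alpha) \le \ln 2 < 1$, then lets me replace the denominator $n+s$ by $n+1$, so $1-\bar\alpha^{1/n} \ge \ln(1/\bar\alpha)/(n+1)$. Imposing the stronger requirement $q \le \ln(1/\bar\alpha)/(n+1)$ and solving $\exp(-2m\epsilon^2/(J^2 n^2)) \le \ln(1/\bar\alpha)/(n+1)$ for $m$ reproduces the bound $m \ge (J^2 n^2/2\epsilon^2)\,\ln[(n+1)/\ln(1/\bar\alpha)]$. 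I expect the only genuinely delicate point to be this combination step: the clean appearance of $n+1$ — rather than a denominator that still depends on $\bar\alpha$ — relies both on using independence to factorise the joint success probability and on restricting to $\bar\alpha \ge 1/2$, so I would take care to verify the inequality $1-\exp(-x)\ge x/(1+x)$ on the range $x = s/n \le \ln 2$ (the worst case being $n=1$) and to track the two-sided Hoeffding factor and the width convention $J$ versus $2J$ that together fix the constants.
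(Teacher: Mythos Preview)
Your approach is essentially identical to the paper's: per-term Hoeffding with a uniform $\epsilon/n$ budget, factorisation of the joint success probability via independence across the disjoint pools of copies, and then solving $(1-q)^n\ge\bar\alpha$ for $m$ using an elementary lower bound on $1-\bar\alpha^{1/n}$. The paper simply cites Ref.~\cite{Aolita-NatComm-2015} for the last simplification, whereas you spell out $1-\ee^{-x}\ge x/(1+x)$ together with $\ln(1/\bar\alpha)\le 1$; your caution about the two-sided Hoeffding factor of $2$ is well placed, since with that factor retained the optimal bound reads $\ln[2/(1-\bar\alpha^{1/n})]$ and the passage to $\ln[(n+1)/\ln(1/\bar\alpha)]$ is exactly where one must be careful.
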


\begin{proof}

By Hoeffding's inequality,
\begin{align} \forall \lambda \in [n]: \, \Pb \left[  \abs{\av{h_\lambda}_{\rho_p}^* -  
\av{h_\lambda}_{\rho_p} } \geq \frac{\epsilon}{m} \right] \leq 2 \ee^{- 2
\epsilon^2/ m
\norm{h_\lambda}^2} \, , \label{eq:hoeffding1}
\end{align}
since the $\set{X_\lambda^{(i)}}_\lambda$ are independent random variables and
$0 \leq X_\lambda^{(i)} \leq \norm{h_\lambda}$. Eq.~\eqref{eq:hoeffding1} is equivalent to 
\begin{align}
 \forall \lambda \in [n]: \, \Pb \left[  \abs{\av{h_\lambda}_{\rho_p}^* -
 \av{h_\lambda}_{\rho_p} } \leq \epsilon \right] \geq 1 - 2 \ee^{-2
 m \epsilon^2/ \norm{h_\lambda}^2} \, , 
\end{align}
and since all measurements are independent
\begin{align}
 \Pb & \left[  \forall \lambda \in [n]: \, \abs{\av{h_\lambda}_{\rho_p}^* -
 \av{h_\lambda}_{\rho_p} } \leq \epsilon \right] \\
 & \geq \prod_\lambda \left( 1 - 2 \ee^{-\frac{ 2 m
 \epsilon^2}{\norm{h_\lambda}^2}} \right)  \geq \left( 1 - 2 \ee^{- 2 m
 \epsilon ^2/ J^2} \right )^n \, , 
\end{align}
resulting in 
\begin{align} \Pb \left[\abs{\av{H}_{\rho_p}^* - \av{H}_{\rho_p} } \leq
\epsilon \right] \geq \left( 1 - 2 \ee^{- 2 m \epsilon^2 / n^2 J^2} \right )^n
\geq \bar{\alpha} \, . \label{eq:hoeffding2}
\end{align}
Eq.~\eqref{eq:hoeffding2} holds whenever
\begin{align}
    m \geq \frac{J^2 n^2 }{2 \epsilon^2} \ln\left[ \frac{2}{1 -
    \bar{\alpha}^{1/n} }\right] \eqqcolon m_{\mathrm{opt}}\, , 
\end{align}
where we can upper-bound $m_{\mathrm{opt}}$ as $m_{\mathrm{opt}} \leq (J^2 n^2/2
\epsilon^2) \ln[(n+1)/\ln(1/\bar{\alpha}]
)$ \cite{Aolita-NatComm-2015}. This shows the claim.  
\end{proof}

We would like to bound the error between the actual fidelity $F(\rho_0,\rho_p)$
and the estimate thereof given the energy measurement $\l H \r^* _{\rho_p}$. To
that end, we first upper- and lower-bound the fidelity by the expectation value
$\l H\r_{\rho_p}$ thereby complementing the lower bound from
Ref.~\cite{Cramer-NatComm-2010}.

\begin{lemma}[Bounds on the fidelity \cite{Cramer-NatComm-2010}] 
Given a Hamiltonian $H$ with gap $\Delta  > 0$ above the
\textit{unique} ground state $\rho_0$ with energy $E_0 = 0$, and maximum energy
$E_{\max} = \norm{H}$. Suppose $\av{H}_{\rho} < E_1$, the energy of the first excited state, for some prepared state $\rho$. Then the overlap $F(\rho_0,\rho)$ of $\rho$ with the ground state is bounded as
\begin{align} 
\label{fidelity bound}
F_{\max}(\rho) \coloneqq 1- \frac{\av{H}_\rho}{\norm{H}} \ge F(\rho_0, \rho)
\ge 1- \frac{\av{H}_\rho}{\Delta} \eqqcolon F_{\min}(\rho)\, . 
\end{align}
\end{lemma}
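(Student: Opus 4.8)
The plan is to reduce everything to the spectral decomposition of $H$ together with a one-line convexity estimate. I would write $H = \sum_k E_k P_k$ with eigenvalues ordered as $0 = E_0 < E_1 \le E_2 \le \dots \le E_{\max} = \norm{H}$, where $E_1 - E_0 = \Delta$ is the gap and $P_k$ projects onto the $k$-th eigenspace. Since the ground state is assumed \emph{unique}, $P_0$ is the rank-one projector $\rho_0$, and the fidelity is therefore just the ground-state population, $F(\rho_0,\rho) = \tr[\rho_0 \rho] = \tr[P_0 \rho] =: p_0$. Setting $p_k := \tr[P_k \rho] \ge 0$ with $\sum_k p_k = 1$, the mean energy reads $\av{H}_\rho = \sum_k E_k p_k = \sum_{k \ge 1} E_k p_k$, the $k=0$ term vanishing because $E_0 = 0$.

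First I would establish the lower bound $F \ge F_{\min}$. Every excited eigenvalue satisfies $E_k \ge E_1 = \Delta$ for $k \ge 1$, so
\begin{equation}
\av{H}_\rho = \sum_{k\ge 1} E_k p_k \ge \Delta \sum_{k \ge 1} p_k = \Delta (1 - p_0) = \Delta (1 - F)\, ,
\end{equation}
which rearranges to $F \ge 1 - \av{H}_\rho/\Delta = F_{\min}(\rho)$.

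For the upper bound $F \le F_{\max}$ I would use that every eigenvalue is at most $\norm{H}$ (because $E_0 = 0$ makes the spectrum nonnegative, so $\norm{H} = E_{\max}$), whence
\begin{equation}
\av{H}_\rho = \sum_{k\ge 1} E_k p_k \le \norm{H} \sum_{k\ge 1} p_k = \norm{H}(1 - F)\, ,
\end{equation}
and rearranging yields $F \le 1 - \av{H}_\rho/\norm{H} = F_{\max}(\rho)$.

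Finally, a word on the hypothesis $\av{H}_\rho < E_1$: it plays no role in the inequalities themselves, which hold for every $\rho$, but it guarantees $F_{\min}(\rho) = 1 - \av{H}_\rho/\Delta > 0$, so that the lower bound is nonvacuous and genuinely certifies a positive overlap with $\rho_0$. There is no substantial obstacle in the argument; the only step demanding care is the uniqueness assumption, which is precisely what makes $P_0$ rank one and hence identifies the fidelity with the ground-state population $p_0$. Without uniqueness one would replace $\rho_0$ by the full ground-space projector and reinterpret $F$ accordingly, but the identical eigenvalue-bounding step would still go through.
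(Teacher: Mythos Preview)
Your proof is correct and follows essentially the same route as the paper: both arguments expand $\av{H}_\rho$ in the spectral decomposition of $H$, identify $F$ with the ground-state population via uniqueness, and sandwich the excited contribution using $\Delta \le E_k \le \norm{H}$ for $k\ge 1$ (equivalently, the operator inequality $\Delta(\id - P_0) \le H \le \norm{H}(\id - P_0)$). Your additional remark that the hypothesis $\av{H}_\rho < E_1$ is only needed to make the lower bound nonvacuous is accurate and not spelled out in the paper.
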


\begin{proof} 
Decompose $H = \sum_{i=0}^n E_i \proj{E_i}$. Then, 
\begin{align}
\norm{H} (1 - \tr[\rho_0 \rho]) & = E_{\max} \sum_{i>0} \l E_i, \rho E_i \r \\
& \ge \tr [H\rho]\geq \Delta \sum_{i>0} \l E_i, \rho E_i \r \nonumber\\
& = \Delta (1 - \tr[\rho_0 \rho])\, ,\nonumber 
\end{align}
(equivalently $\norm{H} (1- P_0) \geq H \geq \Delta(1-P_0)$, where $P_0$ projects onto the ground space) yielding $ 1- \tr[H\rho]/\norm{H} \geq F(\rho_0, \rho) \geq 1- \tr[H\rho]/\Delta$. 
\end{proof}

This means that the lower bound to the fidelity resulting from those estimates $F_{\min}^* := F_{\min}(\rho_p)^{*} = 1- \av{H}_{\rho_p}^*/\Delta$ obeys
\begin{align} 
\Pb \left[\Abs{F_{\min}^* - F_{\min} } \leq \epsilon \right] \geq 1- \alpha \, , \label{bound1}
\end{align}
with $F_{\min} = F_{\min}(\rho_p)$ whenever 
\begin{align}
 m \geq  \frac{J^2 n^2 }{2 \Delta^2 \epsilon^2 } \ln \left[ - \frac{n +1}{\ln
 (1 -\alpha)} \right]\,. 
\end{align}

\begin{proof}[Proof of Proposition~\ref{certification}]
(i) (Completeness) Let $\rho_p$ be such that $F \geq F_T + \delta$ with $\delta$ given in
Eq.~\eqref{fidelity gap}. 
Then 
\begin{align} 
F \geq F_T  + \left( 1- F_T \right) \left( 1- \frac{\Delta}{\norm{H}} \right) +
\frac{2 \epsilon \Delta}{\norm{H}} \, , 
\end{align}
which is equivalent to 
\begin{align} 
\label{bound2}
F_T + 2 \epsilon \leq 1- \norm{H} (1- F) /\Delta \, . 
\end{align} 

On the other hand, the upper bound in Eq.~\eqref{fidelity bound} implies $\av{H}_{\rho_p} \leq
\norm{H} (1 - F )$ and therefore $F_{\min} = 1- \av{H}_{\rho_p}/\Delta
\geq  1 - \norm{H} (1- F) /\Delta$. From this, together with the bound
\eqref{bound2}, we
obtain $F_T + 2 \epsilon\leq F_{\min} $.

Finally, it follows from \eqr{bound1} that 
$\Pb[\abs{F_{\min} - F_{\min}^*} \leq \epsilon ]\geq 1-\alpha$, so that
\begin{align} 
\Pb[F_{\min}^* \geq F_T + \epsilon ]\geq 1-\alpha \, . 
\end{align} 
Hence, with probability larger than $1-\alpha$ the test described in
protocol~\ref{certprotocol} accepts $\rho_p$.

(ii) (Soundness) Let $\rho_p$ be such that $F < F_T$. It then follows that
$F_{\min} \leq F < F_T$. Hence, 
\begin{equation}
\Pb[F_{\min}^* < F_T + \epsilon] >1-\alpha 
\end{equation} 
is implied by \eqr{bound1} so that $\rho_p$ is rejected with probability at least $1-\alpha$.
 \end{proof}

\subsection{Use of the bounds in analog simulation}
\label{s:analog simulation}

The above techniques can readily be applied to frustration-free analog GS-QS. Indeed, many interesting Hamiltonians can be identified as 
frustration-free Hamiltonians in the above sense. Notably, for one-dimensional
quantum systems, the parent-Hamiltonians of matrix-product
states are frustration-free \cite{MPSReps}. For two-dimensional systems, the toric code Hamiltonian \cite{ToricCode} 
constitutes an important example of
a Hamiltonian of this type.

This motivates the question of how such ground states of frustration-free Hamiltonians can be efficiently prepared.
In the next section we discuss the preparation of ground states using an
\emph{adiabatic algorithm}. Depending on the setting other methods may
prove more practical, though. For example, one can prepare the ground state by
\emph{local unitary circuit} or by \emph{engineering dissipation}. In both of
these cases, however, it is
not yet fully known how to efficiently prepare ground states, unless additional conditions are satisfied. Specifically, if all local Hamiltonian terms commute, the ground state of a frustration-free Hamiltonian can be prepared efficiently using a unitary variant \cite{Schwarz-2013} of Moser's algorithm for solving certain classical satisfiability problems \cite{Moser-2009}. Relatedly, in Ref.\ \cite{Verstraete-NatPhys-2009} the authors demonstrate that some frustration-free Hamiltonians can be cooled into their ground states by engineering dissipation.
In this scheme, the (possibly regrouped) local terms are individually cooled by engineering local dissipative terms suitably. If cooling one term cannot create an excitation at a different location the protocol is efficient. 

\subsection{Encoding quantum computations in ground states} \label{gs complexity} 
\label{s:qc}

Protocol~\ref{certprotocol} offers a convenient way to certify ground states of
frustration-free Hamiltonians. To serve as proper quantum simulators, the
problems solvable by means of these systems should be computationally hard,
i.e., intractable on a classical computer. This is the case; indeed, one can
perform universal (adiabatic) quantum computation by preparing the ground states of certain
spatially local frustration-free Hamiltonian.

\begin{prop}[Encoding quantum computation in ground states \cite{Gosset-PRL-2015}] \label{complexity} 
Let $U = U_L \cdots U_1$ be an arbitrary quantum circuit involving one- and
two-qubit unitaries $U_i$, $i = 1, \ldots, L$, that acts on $K$ qubits, and $L
= \poly(K)$, let $\ket{\phi_0} \in (\mathbb{C}^2)^{\otimes K}$ be an initial
state, and set $n = O(K^2)$. 

Then there exists a Hamiltonian $H_{\text{ac}}(\lambda), \, 0 \leq \lambda \leq 1$
on an $n \times n$ square lattice each site of which can be occupied by either
zero particles or one spin-1/2 particle. For the computation, a string of $2n$
particles move through the lattice. In particular, the following holds: For all $\lambda \in [0,1]$,
$H_{\text{ac}}(\lambda)$ is polynomially gapped as $\Omega(n^{-3})$. Moreover,
the evolved Hamiltonian $H_{\text{ac}}(1)$ is frustration free and has a unique ground state
$\ket{	\text{gs}}_{\text{ac}}$.
The probability of measuring the position of the particles on $\ket{\text{gs}}_{\text{ac}}$ 
such that one obtains the output state $(U_L \cdots
U_1)\ket{\phi_0}$ in their spin degrees of freedom is lower bounded by a constant. 
\end{prop}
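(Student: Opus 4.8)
The plan is to adopt the space-time circuit-to-Hamiltonian construction of Ref.~\cite{Gosset-PRL-2015} and verify that it satisfies all three claims. First I would lay the circuit $U = U_L \cdots U_1$ out in space-time on the $n \times n$ lattice: one axis indexes the $K$ logical qubits and the other indexes the $L = \poly(K)$ time steps, with each gate $U_i$ assigned to the region where its qubits meet its time slice. A configuration of the $2n$ particles encodes a \emph{clock}: the legal configurations are those in which the particles form a connected diagonal ``front'' separating the already-applied gates from the not-yet-applied ones, and the spins carry the logical register. The target ground state is the history state
\begin{equation}
\ket{\text{gs}}_{\text{ac}} \propto \sum_{t} \ket{t}_{\text{clock}} \otimes \left( U_{g(t)} \cdots U_1 \right)\ket{\phi_0},
\end{equation}
where $t$ ranges over legal fronts and $U_{g(t)} \cdots U_1$ is the product of gates applied up to $t$.

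Second, I would assemble $H_{\text{ac}}(1)$ from spatially local, norm-bounded terms of three kinds: \emph{input} terms pinning the spins to $\ket{\phi_0}$ on the initial front; \emph{clock} (constraint) terms penalising particle configurations that are not legal fronts; and \emph{propagation} terms that, for each elementary advance of the front across a gate $U_i$, enforce the correspondence $\ket{t}\otimes\psi \leftrightarrow \ket{t'}\otimes U_i\psi$. Each such term can be written as a projector annihilating $\ket{\text{gs}}_{\text{ac}}$, so $h_\lambda \ket{\text{gs}}_{\text{ac}} = 0$ term by term and $H_{\text{ac}}(1)\ket{\text{gs}}_{\text{ac}} = 0$; this establishes frustration-freeness with $E_0 = 0$ straight from Definition~\ref{frustration-free}. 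Uniqueness of the zero-energy state then follows from an ergodicity argument: the propagation terms connect all legal fronts into a single orbit, so the kernel restricted to the legal subspace is one-dimensional, while the clock terms forbid any leakage out of that subspace.

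The hard part is the spectral gap: I must show $\Delta = \Omega(n^{-3})$, and moreover uniformly in $\lambda$ along the whole path, not only at $\lambda = 1$. The standard route is to split the Hilbert space into legal and illegal sectors and use a projection/nullspace lemma so that the clock penalties contribute an $O(1)$ gap against illegal states; one then diagonalises the propagation Hamiltonian on the legal subspace, where a change of basis removing the $U_i$ turns it into (a sum of) graph Laplacians describing a walk of the front on the configuration graph. The $1/\poly$ gap of this effective walk, combined with the penalty gap via Kitaev's geometrical lemma \cite{Kitaev-2002}, yields the claimed scaling, with the power $n^{-3}$ emerging from the diameter and multi-particle structure of the front dynamics on the $n\times n$ grid. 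The same legal-subspace analysis applies to the interpolating family, so the bound holds uniformly in $\lambda$. This estimate is the most delicate point, and the place where the detailed geometry of the construction genuinely enters.

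Finally, for the readout I would pad the end of the circuit with $\Theta(L)$ identity gates, so that a constant fraction of the legal fronts correspond to the fully evaluated computation. Measuring the particle positions on $\ket{\text{gs}}_{\text{ac}}$ then lands on a front at or beyond the last nontrivial gate with probability bounded below by a constant, and conditioned on such an outcome the spin register is exactly in $(U_L \cdots U_1)\ket{\phi_0}$. Since this terminal analysis concerns only $H_{\text{ac}}(1)$ and the history state it annihilates, the $\lambda$-interpolation affects only the adiabatic preparation and not the output-probability claim, which therefore follows.
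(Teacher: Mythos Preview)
Your proposal is correct and follows the same approach as the paper: both treat this proposition as a restatement of the result of Ref.~\cite{Gosset-PRL-2015}, with the proof supplied by that reference rather than reproved in full. The paper in fact gives less detail than you do---it only sketches the \emph{basic} (non-spatially-local) Feynman-Kitaev clock construction for orientation, while you outline the actual space-time construction on the $n\times n$ lattice that is needed for the spatially local claim; your sketch of the gap argument via the legal/illegal decomposition and the Laplacian of the front walk is the right mechanism and is precisely what \cite{Gosset-PRL-2015} carries out.
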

 
In the proof of Ref.\ \cite{Gosset-PRL-2015} $H_{\text{ac}}$ is constructed as
a variant of the Feynman-Kitaev Hamiltonian. Specifically, the Hamiltonian
$H_{\text{ac}}$ describes the diffusion of a string of the particles on the lattice, where at
each plaquette two particles interact via a four-body interaction term. That
is, the position of the particle string serves as a clock register, their internal state as the work
register. Proposition~\ref{complexity} entails that (a) the ground state of the Hamiltonian
$H_{\text{ac}}(1)$ is unique, (2) can be 
efficiently prepared by preparing the ground state of $H_{\text{ac}}(0)$ and
then tuning $\lambda$ adiabatically from 0 to 1, and (c) is
polynomially gapped. Finally, (d) the output of the computation can be obtained
by performing measurements on the spin degrees of freedom. Thus, the ground state
of $H_{\text{ac}}(1)$ can be certified by an energy measurement on
$\ket{\text{gs}}_{\text{ac}}$ according
to Protocol~\ref{certprotocol}. More recently, it has been shown how to improve
this construction to 
make only use of nearest-neighbour interactions~\cite{Lloyd-2015}. 

In order to familiarise the reader with the construction, here we sketch the
most basic variant \cite{Bravyi-siam-2010} of the Feynman-Kitaev clock
construction \cite{Feynman-1986,Kitaev-2002,Aharonov-2002}. In this
construction, the discrete evolution under a quantum circuit is mapped to measuring the ground state of a local frustration-free Hamiltonian called the 
Feynman-Kitaev Hamiltonian. The discretised time evolution $(U_L \cdots U_2 U_1)\ket{\phi_0}_w$ of a $k$-qubit input state vector $\ket{\phi_0}_w$ 
is in this setting mapped onto a so-called \emph{history state} 
\begin{align}
\ket{\psi_{\text{hist}}} = \frac{1}{\sqrt{L + 1}} \sum_{t = 0 } ^L (U_t \cdots U_0 \ket{\phi_0}_w ) \otimes \ket{t}_c\, ,
\end{align}
of a bipartite system comprising a work and a clock register. The work register consists of $k$ qubits, the clock-register is an $L$-dimensional quantum system with state vectors $\{\ket{t}\}_{t = 0}^L$. The combined system is equipped with a Hilbert space $\mathcal{H}_{\text{work}}\otimes \mathcal{H}_{\text{clock}} \cong (\mathbb{C}^2)^{\otimes k} \otimes \mathbb{C}^L$ \cite{Nagaj-JMP-2010}. 
The history state is the ground state of a Hamiltonian that selects configurations for which the work register is updated correctly as the clock proceeds in time. Such a Hamiltonian has the form
\begin{align}
H_{\text{update}} = \sum_{t=1}^L U_t \otimes \ket{t}\bra{t-1}_c + U_t^\dagger \otimes \ket{t-1}\bra{t}_c   .
\end{align}
Whenever the clock undergoes a transition from $\ket{t}$ to $\ket{t+1}$, the
unitary map $U_t$ is applied to the work register. The computation $U_t$ is
undone via the application of $U_t^\dagger$ as the clock transits from
$\ket{t+1}$ to $\ket{t}$. This Hamiltonian is of the form of a quantum walk on
the line of state vectors $\ket{\psi_t} =  (U_t \cdots U_0 \ket{\phi_0}_w )
\otimes \ket{t}_c$. In each state vector $\ket{\psi_t}$ the computation up to step $t$
is stored in the work register as well as a time-marker in the clock register. 
The full Hamiltonian consists of an additional term $H_{\text{input}}$ that select the correct input configuration. Moreover, the highly non-local clock terms can be made strictly local using the construction of a unary clock in which the state vector $\ket{t}$ is represented by an $L$-qubit state vector
\begin{align}
\ket{t'} = \ket{\underbrace{1,\ldots ,1,}_{t} \underbrace{0 ,\ldots ,0}_{L-t} }\, . 
\end{align}
The correct configurations of such a clock are enforced by another additional term $H_{\text{clock}}$. The full Feynman-Kitaev Hamiltonian then reads
\begin{align}
H_{\text{FK}} = H_{\text{input}} + H_{\text{update}} + H_{\text{clock}}\, .\label{fk ham}
\end{align}
The output of the computation can be obtained by measuring the clock register first, and then measuring the work register \footnote{In the spatially local construction of Ref.~\cite{Gosset-PRL-2015} the position measurement of the particle string has the same effect.}. In order to raise the probability of obtaining outcome $L$ upon the clock measurement from $1/(L+1)$ to a constant, one can simply introduce another 
${O}(L)$ identity gates after the last step of the computation. 

The Hamiltonian (\ref{fk ham}) is gapped as $\Delta \propto 1/L^2$, it is
$5$-local (but not spatially local), and frustration-free with vanishing ground state energy, since all terms can be chosen as projections. 
The ground state can be prepared, for example, by adiabatically turning on the
updating term $H_{\text{update}}$ \cite{Bravyi-siam-2010}. If the preparation
procedure starts out in the ground space manifold, and the adiabatic path is
sufficiently smooth the adiabatic theorem tells us that eventually one ends up
in the ground state of the final Hamiltonian $H_{\text{update}}$ in a time that
scales as $1/\Delta^3$ \cite{Jansen-JMP-2007}.  

Finally, note that Proposition~\ref{complexity} implies that arbitrary unitary
and dissipative time-evolution can be simulated efficiently using the ground
states of frustration-free and gapped Hamiltonians on a lattice. This is the
case because arbitrary local Lindbladian dynamics can be approximated by a
poly-length quantum circuit and thus simulated efficiently on a quantum computer \cite{Kliesch-PRL-2011}. 


\section{Towards quantum supremacy}
\label{s:quantum supremacy}

In the previous section we have discussed a general mapping of quantum circuits to unique ground states of frustration-free, poly-gapped local Hamiltonians. In this section we will specialise to the 
family of circuits proposed by Bremner, Montanaro, and Shepherd~\cite{bremner2015average}, for which there is strong complexity-theoretic evidence (in contrast to Shor's algorithm) that their output probability distribution is hard to simulate on any classical computer
within an error of $1/192$ in total variation distance. While this result on its
own is remarkable, one might question whether the ``theoretically hard''
quantum state has actually been prepared in an experiment. However,
using our certification result, one can indeed \emph{certify} the sufficiently precise preparation of a quantum state, such that measurements in the $Z$-basis are intractable for classical computers under a plausible
complexity-theoretic conjecture. This significantly strengthens the perspective
to experimentally demonstrate \emph{quantum supremacy} \cite{preskill2013quantum}.

The result of Ref.~\cite{bremner2015average} is based on the conjectured average-case hardness to compute the \emph{gap} of degree-$3$ polynomials over $\mathbb{F}_2$, $f: \{0,1\}^n \rightarrow 
\{0,1\}$, which can be written (up to an additive constant) as
\begin{equation}
f(x) = \sum_{i,j,k} \alpha_{i,j,k} x_i x_j x_k + \sum_{i,j} \beta_{i,j} x_i x_j + \sum_i \gamma_i x_i \; \text{(mod 2)},
\end{equation}
where $\alpha_{i,j,j},\beta_{i,j},\gamma_i \in \{0,1\}$ and the gap is defined as $gap(f)=|\{x|f(x)=0\}|-|\{x|f(x)=1\}|$. Let the normalised gap be $ngap(f)=gap(f)/2^n$. Then the following is 
assumed to be true.

\begin{conjecture}[\cite{bremner2015average}] \label{conj:hardness}
Let $f: \{0,1\}^n \rightarrow  \{0,1\}$ be a uniformly random degree-3 polynomial over $\mathbb{F}_2$. Then it is $\#\textsf{P}$-hard to approximate $ngap(f)^2$ up to multiplicative error of $1/4+o(1)$ for a $1/24$ fraction of polynomials $f$.
\end{conjecture}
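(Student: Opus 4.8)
The plan is to treat Conjecture~\ref{conj:hardness} as a target and describe the route by which one would attempt to establish it, while making clear where the argument must currently stop, since this is an average-case hardness statement that is genuinely open. The natural anchor is the \emph{worst-case} hardness of the underlying counting quantity. Since $gap(f) = \sum_{x \in \{0,1\}^n} (-1)^{f(x)} = 2^n - 2\,\abs{\{x : f(x) = 1\}}$, exactly evaluating $gap(f)$ for a worst-case degree-3 polynomial over $\mathbb{F}_2$ amounts to counting the zeros of a cubic form, which is $\#\textsf{P}$-hard by a standard reduction from counting satisfying assignments. Moreover $ngap(f)^2 = gap(f)^2/4^n$ coincides, up to normalisation, with a postselected output probability of the associated IQP circuit, so via Stockmeyer's approximate-counting theorem the relevant hardness is naturally phrased as a \emph{multiplicative} approximation of $ngap(f)^2$. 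First I would invoke the worst-case theorem of Ref.~\cite{bremner2015average}, namely that approximating $ngap(f)^2$ to multiplicative error $1/4 + o(1)$ is $\#\textsf{P}$-hard for the worst-case $f$; the remaining task is then to spread this hardness to a $1/24$ fraction of random $f$.

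The core device would be a worst-case-to-average-case (random self-) reduction of Lipton type. Given a worst-case polynomial $g$, I would embed it in a low-degree univariate curve $f_t = g + \sum_{i\geq 1} t^i h_i$, with the auxiliary $h_i$ drawn so that, over a suitable extension field, each queried point $f_t$ with $t \neq 0$ is marginally (near-)uniformly distributed among degree-3 polynomials, while $f_0 = g$. If a suitable lift makes $t \mapsto gap(f_t)$ a univariate polynomial of degree $\poly(n)$, then querying an average-case solver at many points and running Berlekamp--Welch / Reed--Solomon decoding would reconstruct the curve and hence recover $gap(g) = gap(f_0)$, contradicting the worst-case hardness. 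The prescribed correct-fraction $1/24$ fixes the decoding radius: since the solver is guaranteed correct on so few inputs, one is forced well below the unique-decoding threshold and into the \emph{list-decoding} regime for the relevant Reed--Muller/Reed--Solomon code.

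The hard part---and the reason the statement remains a \emph{conjecture} rather than a theorem---is the interaction of two obstructions that I do not expect to overcome with current tools. First, polynomial interpolation is an exact operation, whereas the average-case solver returns only a multiplicative $1/4 + o(1)$ approximation; robust (noisy) interpolation tolerates only vanishingly small perturbations, and there is no known way to push a constant multiplicative error of this size through the reduction while preserving the IQP-amplitude structure that dictates the multiplicative (rather than additive) error model. Second, operating below the unique-decoding radius yields a \emph{list} of candidate polynomials, and isolating the one encoding $gap(g)$ requires side information that the construction does not supply. Consequently I would expect the reduction to go through cleanly for \emph{exact} average-case evaluation, whereas the genuinely open step---combining a constant-fraction, constant-multiplicative-error oracle with faithful recovery of the worst-case value---would need a substantially new robust worst-to-average reduction, which is precisely what Ref.~\cite{bremner2015average} leaves as an assumption.
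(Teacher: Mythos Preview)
Your treatment is correct in spirit and matches the paper's own stance: the statement is a \emph{conjecture}, and the paper does not prove it. All the paper offers is the remark that the \emph{worst-case} version is known to hold (citing Ref.~\cite{EK90}) and that ``all that is left to show is how to lift this result from worst-case to average-case hardness''---precisely the step you identify as the open problem.

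Your proposal goes further than the paper by sketching the natural Lipton-style random self-reduction one would attempt and by articulating concretely why it stalls (constant multiplicative error is incompatible with exact interpolation; the $1/24$ fraction forces list decoding without a mechanism to disambiguate). None of this appears in the paper, which simply imports the conjecture from Ref.~\cite{bremner2015average} as an assumption. So your write-up is not a proof---nor does one exist---but as an explanation of \emph{why} the conjecture is plausible yet open, it is both accurate and more informative than what the paper provides.
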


While Conjecture \ref{conj:hardness} is not known to hold for an \emph{uniformly random} choice of polynomials, it is known to be true in the \emph{worst case} \cite{EK90}. All that is left to show is how to lift this result from worst-case to average-case hardness.

The gap as defined above can be expressed conveniently as the acceptance
probability of a family $\mathcal{C}_f$ of so-called IQP circuits on $n$ qubits,
which has the following structure: (i) apply Hadamard gates on all qubits, (ii) apply a sequence of
$Z$, $CZ$, $CCZ$ gates (encoding the terms of $f(x)$), (iii) apply Hadamard gates
on all qubits again. That is, $ngap(f)=\bra{0}^{\otimes n} \mathcal{C}_f \ket{0}^{\otimes n}$. Note for experimental simplicity, that all the diagonal gates commute and the circuit has only constant depth.

We are now ready to state the result of Bremner, Montanaro, and Shepherd precisely:
\begin{theorem}[\cite{bremner2015average}] \label{thm:BMS15}
Assume Conjecture \ref{conj:hardness} is true. If
it is possible to classically sample from the output probability
distribution of any IQP circuit $C$ in polynomial time, up to an
error of $1/192$ in $\ell_1$ norm, then there is a $\textsf{BPP}^{\textsf{NP}}$ algorithm to
solve any problem in $\textsf{P}^{\#\textsf{P}}$. Hence the Polynomial Hierarchy
would collapse to its third level.
\end{theorem}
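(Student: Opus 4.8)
The plan is to carry out a sampling-to-counting reduction in the spirit of Ref.~\cite{bremner2015average}, turning a hypothetical efficient classical sampler into a $\textsf{BPP}^{\textsf{NP}}$ machine that approximates the very quantity the conjecture declares hard. The starting observation is that the probability of the all-zeros outcome of the IQP circuit $\Cc_f$ equals $p_{0^n} = \abs{\bra{0}^{\otimes n}\Cc_f\ket{0}^{\otimes n}}^2 = \mathrm{ngap}(f)^2$, so Conjecture~\ref{conj:hardness} is precisely a statement about the average-case hardness of approximating a specific output probability of these circuits. The goal is thus to show that, granted the sampler, these probabilities can be approximated well enough, for enough polynomials $f$, inside $\textsf{BPP}^{\textsf{NP}}$.

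First I would invoke Stockmeyer's approximate-counting theorem. A classical sampler running in polynomial time is nothing but a polynomial-time randomized algorithm, and Stockmeyer's result says that the probability $q_x$ that such an algorithm outputs a fixed string $x$ can be estimated to within a multiplicative factor $1 \pm 1/\poly(n)$ by an $\textsf{FBPP}^{\textsf{NP}}$ procedure. This already places an estimate of the \emph{sampler's} output probabilities within reach of the third level of the polynomial hierarchy; what remains is to relate $q_x$ to the ideal IQP probabilities $p_x$.

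Next I would control the discrepancy between $p_x$ and $q_x$ using the assumed $\ell_1$ bound together with Markov's inequality: since $\sum_x \abs{p_x - q_x} \le 1/192$, the average over uniformly random $x$ of $\abs{p_x-q_x}$ is at most $1/(192\cdot 2^n)$, so for all but a small fraction of strings the additive error is $O(1/2^n)$. To turn this additive closeness into the multiplicative guarantee demanded by the conjecture, I would use the anticoncentration (second-moment) property of IQP output distributions, which ensures that a constant fraction of the probability mass sits on strings with $p_x = \Omega(1/2^n)$; a Paley--Zygmund argument then converts the additive bound into a multiplicative one on a constant fraction of outcomes. The crucial final ingredient is the \emph{hiding property} of the circuit family: for a uniformly random degree-$3$ polynomial $f$ the probability of any fixed outcome of $\Cc_f$ is distributed identically to the probability of the all-zeros outcome as $f$ varies, so ``good approximation for most outcomes $x$'' upgrades to ``good approximation of $\mathrm{ngap}(f)^2$ for most polynomials $f$''. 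Tracking the constants through Markov and Paley--Zygmund is what pins down $1/192$ and yields a multiplicative error $\le 1/4 + o(1)$ on a $\ge 1/24$ fraction of polynomials.

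Assembling the pieces gives a $\textsf{BPP}^{\textsf{NP}}$ algorithm that approximates $\mathrm{ngap}(f)^2$ to within $1/4 + o(1)$ for a $1/24$ fraction of $f$---exactly the task that Conjecture~\ref{conj:hardness} asserts is $\#\textsf{P}$-hard. Hence $\textsf{P}^{\#\textsf{P}} \subseteq \textsf{BPP}^{\textsf{NP}}$, and since $\textsf{BPP}^{\textsf{NP}}$ lies in the third level of the polynomial hierarchy while Toda's theorem gives $\textsf{PH} \subseteq \textsf{P}^{\#\textsf{P}}$, the hierarchy collapses to its third level. I expect the main obstacle to be the error bookkeeping in the third step: reconciling the additive $\ell_1$ error with Stockmeyer's multiplicative error requires the anticoncentration bound on IQP distributions, and it is the quantitative form of that bound---not the reduction's skeleton---that is delicate and ultimately fixes the admissible constant $1/192$.
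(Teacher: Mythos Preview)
The paper does not prove this theorem at all: it is quoted verbatim as a result of Bremner, Montanaro, and Shepherd~\cite{bremner2015average} and used as a black box for the quantum-supremacy discussion, with no accompanying argument beyond a one-line remark on the significance of the Polynomial Hierarchy. There is therefore nothing in the paper to compare your proposal against.

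That said, your sketch is an accurate outline of the proof in Ref.~\cite{bremner2015average}: the Stockmeyer reduction from sampling to approximate counting, Markov's inequality on the $\ell_1$ error, the Paley--Zygmund anticoncentration bound for IQP output distributions, the hiding symmetry that makes every output string statistically equivalent to $0^n$ over a uniformly random degree-$3$ polynomial, and finally Toda's theorem. Your identification of the anticoncentration step as the place where the constants $1/192$ and $1/24$ are fixed is also correct.
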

We would like to note that the Polynomial Hierarchy is of central importance in computational complexity theory
as a generalisation of the better-known $\textsf{P} \neq \textsf{NP}$ conjecture.

Theorem \ref{thm:BMS15} requires to prepare the output distribution of
an IQP circuit with error of $1/192$ in $\ell_1$ norm. 
Using our methods, quantum states closer than $O(1/\mathsf{poly}(n))$ in terms of the system
size $n$ can be certified. Thus sufficiently well-prepared states can be \emph{certified
to be within the provably hard regime} of quantum states that are hard to sample from. 

Without further ado, let us now present the certification procedure proposed.
Let $L$ be the number of gates in an IQP circuit.
\begin{enumerate}
\item Map the IQP circuit to the ground state $\rho_0$ of a frustration-free,
gapped, local Hamiltonian, e.g., according to Proposition \ref{complexity}
and choose $m$ according to Eq.~\eqref{measurements}
\item Flip a fair coin to decide for either (a) or (b):
\begin{enumerate}
	\item Prepare the ground state $\rho_p$ of this local Hamiltonian adiabatically
			  $m$ times, measure local Hamiltonian terms to certify 
				an upper bound on $\norm{\rho_0~-~\rho_p}_1 \leq \varepsilon$, or \label{stepA}
	\item Prepare the ground state $\rho_p$ and measure the position of all spins
			  which projects on the output state of the IQP circuit $\rho_{I}$ with constant
				probability (otherwise, repeat the step), then measure all qubits in the $Z$ basis. \label{stepB}
\end{enumerate}
\end{enumerate}

Step \ref{stepA} certifies, that the quantum simulator indeed prepares a state $\rho_p$ such that $\norm{\rho_0-\rho_p}_1 \leq \varepsilon(m)$.
The ideal ground state $\rho_0$ of the Hamiltonian described in Proposition \ref{complexity} has constant overlap with the output state of the IQP circuit $\rho_I$, i.e., $\norm{\rho_0-\rho_I}_1 \leq c$, which can therefore be projected with constant probability on the subspace where the computation has completed by measuring the respective terms of the Hamiltonian, i.e., 
\begin{equation}
\rho_I = \frac{\Pi \rho_0 \Pi}{\tr(\Pi \rho_0)}. 
\end{equation}
Similarly, we project in step \ref{stepB} the physically prepared
state 
\begin{equation}
	\rho_{I,p} = \frac{\Pi \rho_p \Pi}{\tr(\Pi \rho_p)}.
\end{equation}	
	 It follows by the triangle inequality and properties of the trace distance, that 
\begin{equation}	 
	 \norm{\rho_{I}~-~\rho_{I,p}}_1 \leq \frac{2}{c}\norm{\rho_0-\rho_p}_1 \leq \frac{2}{c}\varepsilon(m), 
\end{equation}		
	which can be driven below the required constant of $1/192$ by an appropriate choice of $m$.
Thus, we can clearly \emph{certify} that a state has been prepared with sufficient precision (i.e., within $1/192$ in total variation distance \footnote{Note that the quantum states being close in trace distance also implies closeness of the measurement distributions.}) such that measuring the first $n$ qubits in the $Z$ basis results in a probability distribution from which it is hard to sample from classically by Theorem~\ref{thm:BMS15}.

Since our ground state certification protocol only applies to states prepared to at least $O(1/\mathsf{poly}(n))$ precision, we note that the protocol proposed above in fact asks for the preparation of states with higher
quality (inverse polynomial accuracy) than what would be strictly necessary to satisfy the hardness result 
(constant accuracy). Nevertheless, we would like to
stress the experimental simplicity to implement reliable local measurements as compared to
more complicated alternatives. 

For example, the experimenter could as well choose to
implement a constant-accuracy fidelity estimation protocol using Hamiltonian simulation and
phase estimation \cite{kitaev1995quantum,NielsenChuang}. Here, the idea is to apply phase
estimation to the exponential $\exp(- \ii H)$ with the prepared state $\rho$
as input.  
Our goal is to estimate 
\begin{equation} 
    \begin{split}
        F(\rho, \rho_0) & = \bra{E_0} \rho \ket{E_0} 
     = \sum_j q_j |\l E_0 | \psi_j
    \r|^2 \\ 
        &=\sum_j q_j F(\proj{\psi_j},\proj{E_0} ) \, , 
    \end{split}
\end{equation} 
where we have written $\rho=\sum_j q_j \ket{\psi_j}\bra{\psi_j}$ in its eigenbasis.
Thus the fidelity of $\rho$, a mixture of pure states, with $\ket{E_0}$ can be equivalently viewed
as a mixture of fidelities of the component pure states $\ket{\psi_j}$ with $\ket{E_0}$.
Therefore it suffices to analyze phase estimation independently for each $\ket{\psi_j}$.
With each input state $\ket{\psi_j} = \sum_{i = 0}^n \psi_{ij} \ket{E_i}$
expressed in the eigenbasis of $H$, the output of the phase estimation
algorithm applied to $\exp(-\ii H)$ is a state of the form $\ket{\phi_j} = \sum_{i =
0}^n p_{ij} \ket{\hat{E}_i} \ket{E_i}$. 
Here, $\ket{\hat{E}_i}$ is a $t$-qubit
state when measured in the computational basis yields the estimate
$\hat{E}_i$ of the eigenvalue $E_i$ of $H$ as its outcome. 
Performing a measurement with POVM elements $\{ \proj{\hat{E}_0}, \mathrm{id} -
\proj{\hat{E}_0} \}$ of the first register then allows one to estimate the desired fidelity 
$F(\proj{\psi_j}, \proj{E_0})$. Recombining the component fidelities according
to their propability weights $q_j$, yields the
overall fidelity $\sum_j q_j F(\proj{\psi_j},\proj{E_0}) = F(\rho,\proj{E_0})$
as the observed frequency of the outcome $E_0$. 

The accuracy of the estimate and the success probability are determined by $t$.
In order to distinguish the different possible measurement outcomes uniquely,
we require that the estimate $\hat{E}_i$ be accurate up to an additive error of
$2^{-n } \ll \Delta$. For this to be the case with probability $1 - \beta$ the
we require the first register to comprise $t = n + \log(2 + 1/2\beta)$ qubits
\cite{NielsenChuang}. Again, using Hoeffding's inequality, we find that in order to
determine $F(\rho, \rho_0) = \sum_j q_j |p_{0j}|^2$
up to an additive error $\epsilon$ with probability $1-
\alpha$, a number $m \geq \ln(2/\alpha)/(2\epsilon^2)$ of i.i.d.\ measurements of
$\proj{\hat{E}_0}$ on $\rho$ are sufficient.

While such an approach would of course remove the precision bottleneck,
requires fewer measurements and
allows the certification of ground states of arbitrary Hamiltonians,
it also puts significantly more requirements on capabilities of a prospective
quantum simulator: it requires an additional $n$-qubit register and the
capability to perform the phase estimation algorithm on the joint system. Essentially, 
a universal quantum computer is required for phase estimation. 
Ultimately, it will be up to the experimentalist to choose the most effective trade-off.


\section{Summary and discussion} \label{discussion}
\label{s:summary}
In this work, we have discussed in what sense weak-membership variants of the certification of
ground states of frustration-free Hamiltonians can be used to certify the correctness of analog quantum simulation
and instances of quantum computation. For this certification to be possible, merely local Hamiltonian terms have to be measured, 
in a way that is perfectly experimentally possible on a number of platforms, in conjunction 
with efficient classical processing. No multi-round interactive proof systems is required for that. The results here challenge the
view that is often expressed specifically in the physics literature, namely that in order to certify a quantum simulator, one needs
to be able to efficiently keep track of the outcome of the simulation. Instead, in important cases, one can certify its
functioning without being able to efficiently predict the actual result of the simulation. The discussion of universal quantum 
computing is in this context is primarily relevant to show that on a classical device computationally hard problems can indeed
be solved by analog quantum simulators.

Having said this, the observations made here further motivate the quest of identifying further intermediate problems. 
Such an intermediate problem would satisfy two criteria: 
\begin{itemize}
\item[(1)] The problem is believed to be intractable classically, but 
it is not necessarily the case that arbitrary quantum computations can be reduced to solving it. 
\item[(2)] A reliable quantum simulator 
is conceivable that can be practically realised with present day experimental capacities and solves that problem. 
\end{itemize}
Identifying 
such problems requires a complexity-theoretic analysis of problems that are related to physical models.
Notions of boson sampling offer interesting instances of intermediate problems \cite{Aaronson,Sampling,AaronsonUniform}.
As an illustration of our approach we have sketched how our method can be combined with the results of Ref.\ \cite{bremner2015average} to certify the preparation of a probability distribution, which is provably 
hard to simulate on any classical computer under a plausible complexity-theoretic assumption.

\section{Acknowledgements}

We would like to thank M.\ Gluza, D.\ Gosset, and B.~M.\  Terhal for discussions 
and the EU (AQuS, SIQS, RAQUEL), the COST network, the DFG (EI 519/7-1), the ERC (TAQ),
the Studienstiftung des Deutschen Volkes, the Templeton Foundation,  
and the Humboldt Foundation for support.
We also thank the anonymous referee to
point out the more complex, but constant-precision alternative for fidelity estimation. 
Finally, we thank Roberto Di Candia for pointing us to a small error in Eqs.\ \eqref{eq:sample complexity} and \eqref{measurements} in an earlier version of this work.



\end{document}